\newcommand{\zigZagCurve}{zig-zag-curve\xspace}
\newcommand{\blockingPath}{blocking path\xspace}
\newcommand{\rev}{\mathit{rev}}
\newcommand{\linkageEdgeName}{linkage-edge\xspace}
\newcommand{\linkageExternalEdge}{external \linkageEdgeName}
\newcommand{\hva}{HVA}
\title{Order-preserving Drawings of Trees With Approximately Optimal Height (and
\todo{Eine KIT email waere vielleicht besser?}
Small Width)
\thanks{Work done while JB was
visiting Univ.~of Waterloo.  Research of TB supported by NSERC.} 
}
\author{
Johannes Batzill\inst{1}
\and
Therese Biedl\inst{2}
}
\institute{
Karlsruhe Institute of Technology,
Karlsruhe, Germany.  \email{batzilljohannes@gmail.com}
\and
David R.~Cheriton School of Computer
Science, University of Waterloo, Waterloo, Ontario N2L 1A2, Canada.
\email{biedl@uwaterloo.ca}
}
\begin{document}
\maketitle

\begin{abstract}
In this paper, we study how to draw trees so that they are planar, straight-line and respect a given order of edges around each node.  We focus on minimizing the height, and show that we can always achieve a height of at most $2pw(T)+1$, where $pw(T)$ (the so-called {\em pathwidth}) is a known lower bound on the height.  Hence we give an asymptotic 2-approximation algorithm.  We also create a drawing whose height is at most $3pw(T)$, but where the width can be bounded by the number of nodes.  Finally we construct trees that require height $2pw(T)+1$ in all planar order-preserving straight-line drawings.
\end{abstract}


\section{Introduction}
Let $T$ be a tree, i.e., a connected graph with $n$ nodes and $n-1$ edges.
Trees occur naturally in many applications, e.g. family trees, organizational
charts, directory structures, etc.    To be able to understand and
study such trees, it helps to create a visualization, i.e., to draw the
tree.  This is the topic of this paper.

There are many results concerning how to draw trees, see
for example \cite{BattistaF14} and the references therein.  In this
paper, we study tree-drawings of {\em ordered trees}, i.e., we assume that
with $T$ we also are given a fixed cyclic order in which the edges at each
node should occur, and our drawings should respect this.  Moreover, we
demand that the drawing is
{\em planar} (have no crossings), {\em straight-line} (edges are drawn as 
straight line segments),  and nodes are placed at points
with integer $y$-coordinates.  (We will sometimes also care about
nodes having integer $x$-coordinates.)    If all $y$-coordinates are
in the range $\{1,\dots,k\}$, then we call such a drawing a
(planar, straight-line, order-preserving)
{\em $k$-layer drawing}, and say it has {\em height} $k$  and {\em layers}
$1,\dots,k$ (from top to bottom).
We often omit ``planar, straight-line, order-preserving'', as we
study no other drawing-types. 

The main objective of this paper is to find drawings that use as few
layers as possible.  We briefly review the existing results.
For arbitrary graphs with $n$ nodes,
$\frac{2}{3}n$ layers always suffice~\cite{Chrobak199829}.
For \emph{trees}, $\log n$ layers%
\footnote{The paper bounds the width, not the height, but
rotating their drawing by 90$^\circ$ gives the result.}
are sufficient, and this is tight for some trees
\cite{Crescenzi1992187}.  Later, Suderman~\cite{Sud04}
showed that every tree can be drawn with $\lceil \frac{3}{2}pw(T) \rceil$
layers, where $pw(T)$ denotes the \emph{pathwidth} of a tree
(defined in Section \ref{sec:definitions}).  Since any tree
requires at least $pw(T)$ layers \cite{FLW03}, he hence gives
an asymptotic $\frac{3}{2}$-approximation on the number of layers required
by a tree.  Later it was shown that the minimum number of layers required 
for a tree can be found in polynomial time \cite{minimumLayer}.

All the above results were for {\em unordered} trees, i.e.,
the drawing algorithm is allowed to rearranged the subtrees
around each node arbitrarily.  In contrast to this, we study
here {\em ordered trees}, where we are given a fixed cyclic
order of edges around each node, and the drawing must be
{\em order-preserving}, i.e., respect this cyclic order.
Garg and Rusu \cite{GR03} showed that any tree has an order-reserving
drawing of height%
\addtocounter{footnote}{-1}%
\footnotemark~%
$O(\log n)$ and area $O(n\log n)$;
the height can be seen to be at most $3\log n$.

In this paper, we give a different construction for order-preserving
drawings of tree which improves the bounds of Garg and Rusu in that
we guarantee an approximation of the minimum-possible height.
Inspired by the approach of
Suderman \cite{Sud04}, we use again the pathwidth, and show
that every tree has an order-preserving drawing on $2pw(T)+1$ layers;
this is hence an asymptotic $2$-approximation algorithm on the number of layers
for order-preserving drawings.  We also show that for some trees,
we cannot hope to do better, as they need $2pw(T)+1$ layers.

In this construction, the width is potentially very large. 
We therefore give another (and in fact, much simpler)
construction that achieves $3pw(T)$ layers and for which the width
is $n$.  Since any tree has $pw(T)\leq \log_3 (2n+1)$ \cite{scheffler1990linear}, 
our results are
never worse than the ones of Garg and Rusu, and frequently better.

\section{Preliminaries}
\label{sec:definitions}
\label{section_preliminaries}

The {\em pathwidth} is a well-known graph-parameter, usually defined as
the smallest $k$ such that a super-graph of the graph is an interval graph
that can be colored with $k+1$ colors.  For trees, the following simpler
definition is equivalent \cite{Sud04}:

\begin{definition}
The {\em pathwidth} $pw(T)$ of a tree $T$ is 0 if $T$ is a single node,
and $\min_{P}\allowbreak \max_{T'\subseteq T-P}\allowbreak \left\{ 1+pw(T') \right\}$ otherwise,
where the minimum is taken over all paths $P$ in $T$.   A path where
the minimum is achieved is called a {\em main path}. 
\end{definition}

We draw trees by splitting them at a main path, drawing subtrees
recursively, and merging them.  
The following terminology is helpful.  For a tree $T$
and a strict sub-tree $C$,  a \emph{\linkageEdgeName} is an edge $e$ of $T$
with exactly one endpoint in $C$ (called the {\em linkage-node})
and the other endpoint in $T-C$ (called the {\em anchor-node}). 
Usually $C$ will be a connected component of $T-P$ for some path $P$,
and then the linkage-edge of $C$ is unique.
An {\em external linkage-edge} of a tree $T$ is an edge $e$ that
belongs to an (unspecified) super-tree $T'$ of $T$ and has exactly
one end in $T$ and the other in $T'-T$.

To be able to merge subtrees, we need to specify
conditions on subtrees, concerning not only where linkage-nodes are placed,
but also on where the external linkage-edges could be drawn such that
edge-orders are respected.

\begin{definition}
\label{def_kv_drawable}
Let $\Gamma$ be an order-preserving drawing of an ordered tree $T$, 
and let $e = (v, u)$ be an external linkage-edge of $T$ with $v\in T$.

We say that $\Gamma$ is {\em $e$-exposed} if $v$ is in the top or bottom level, and after inserting $e$ by drawing outward (up or down) from $v$, the 
drawing respects the edge-order at $v$ in the super-tree of $T$ that defined
the external linkage-edge.

We say that $\Gamma$ is {\em $e$-reachable} if 
$v$ is drawn either as unique leftmost or as unique rightmost node, and 
after inserting $e$ by drawing outward (left or right) from $v$, the
drawing respects the edge-order at $v$ in the super-tree of $T$ that defined
the external linkage-edge.
\end{definition}

See also Fig.~\ref{fig:reachable}.  We sometimes use the terms 
top-$e$-exposed, bottom-$e$-exposed, left-$e$-reachable and right-$e$-reachable
 if we want to clarify the placement
of node $v$.  Note that any top-$e$-exposed drawing can be converted into a
bottom-$e$-exposed one by rotating it by $180^\circ$;
this does not change edge orders.

\begin{figure}[t]
\hspace*{\fill}
\includegraphics[width=0.6\linewidth]{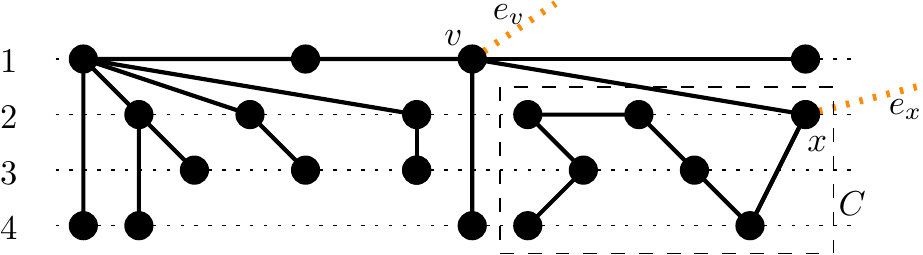}
\hspace*{\fill}
\caption{An \hva-drawing (defined in Section~\ref{sec:HVA}) on 4 layers.
$(v,x)$ is the linkage-edge of $C$, with $v$ the anchor-node and $x$ the
linkage-node.  The drawing is $e_v$-exposed (presuming the order in the
supertree is respected), but not $e_x$-exposed since $x$ is not unique among
the rightmost nodes.}
\label{fig:fig_left_reachable}
\label{fig:reachable}
\label{fig:fig_k_v_e_drawable}
\end{figure}


\section{$\mathbf{3pw(T)}$-Layer \hva-Drawings}
\label{section_3pwT}
\label{sec:HVA}

In this section, we construct special types of drawings of trees that we call
{\em \hva-drawings}:  Every edge is either Horizontal, Vertical,
or connects Adjacent layers.  We will see that for such drawings,
the width is fairly small. 
We construct such drawings using induction on the pathwidth; the
following is the hypothesis:

	\begin{lemma}
	\label{lem:3pw}
Let $T$ be an ordered tree, and let $e$ be an \linkageExternalEdge 
with endpoint $v\in T$. Then $T$ has an $e$-exposed \hva-drawing on $3pw(T)+1$ layers.
Moreover, if $T$ has at least two nodes and a main path that ends at $v$, then 
it has such a drawing on $3pw(T)$ layers.
	\end{lemma}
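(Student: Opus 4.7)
My plan is to argue by induction on $p = pw(T)$, proving the stronger ``moreover'' clause (main path ending at $v$, height $3p$) first and then deriving the general claim by paying one additional layer. The base case $p=0$ is immediate: $T$ is a single node, drawn on one layer, which is vacuously HVA and $e$-exposed; the moreover clause is vacuous since $T$ has only one node.

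For the inductive step, fix a main path $P$ of $T$ and let $C_1,\dots,C_m$ be the components of $T-P$, each with $pw(C_j)\le p-1$. By induction each $C_j$ admits an $e_{C_j}$-exposed HVA drawing $\Gamma_{C_j}$ on at most $3(p-1)+1=3p-2$ layers, where $e_{C_j}$ is the linkage-edge of $C_j$. In the moreover case, I place $P$ on a single horizontal row and attach each $\Gamma_{C_j}$ above or below this row according to the cyclic edge-order at the anchor $a_{C_j}\in P$. Each $\Gamma_{C_j}$ is oriented (possibly rotated $180^\circ$) so that its linkage-node lies on the row adjacent to the row of $P$; the linkage-edge is then a single A-edge and the HVA property is preserved. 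Components attached to distinct anchors occupy disjoint vertical strips, so components on the same side of $P$ share layers without conflict. I choose the vertical placement of $P$ so that $v$, at its endpoint, lands on the topmost layer, making the drawing top-$e$-exposed.

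The main obstacle is fitting everything into $3p$ layers: a naive layout with both the above- and below-stacks at their full height $3p-2$ uses $6p-3$ layers, far too many. The argument must exploit the HVA property --- every edge is horizontal, vertical, or spans only adjacent layers --- to share layer numbers between the two sides of $P$ wherever the column ranges of the corresponding components allow it, and must carefully handle anchors that carry components on both sides simultaneously. This packing step, essentially reusing the same horizontal bands on opposite sides of the path whenever no edge conflicts arise, is where I expect most of the technical work to lie, and where correctness of the cyclic order at every node on $P$ needs careful verification (exposure guarantees that each linkage-edge slots into the right position in the local order). Finally, the general claim (without the moreover hypothesis) costs at most one extra layer: when $v$ is not the endpoint of any main path of $T$, we can still use a single A-edge to connect $v$ to a main-path configuration one layer away, so the total height grows to $3p+1$.
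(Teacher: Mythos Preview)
Your proposal has a genuine gap at the central step, and the gap is not merely ``technical work to fill in later'' --- the approach as stated cannot reach the claimed height bound.

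You place the main path $P$ on a single horizontal row and attach each component $C_j$ above or below that row according to the cyclic order at its anchor. With this layout, the components above $P$ occupy one band of layers and the components below $P$ occupy a disjoint band of layers; the path row separates them. Whenever some anchor on $P$ has a component forced above and another anchor has a component forced below (which is generic: the cyclic order at each anchor dictates the side, and you have no control over it), the total height is at least
\[
(\text{max height above}) + 1 + (\text{max height below}) \;\le\; (3p-2)+1+(3p-2) \;=\; 6p-3,
\]
and in the worst case this is tight. Your proposed remedy, ``share layer numbers between the two sides of $P$ wherever the column ranges allow it,'' does not help: a component above the path row and a component below the path row already live in disjoint layer sets, so there is nothing to share. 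The HVA property constrains edge shapes; it does not let a subtree that must sit below the path migrate into layers above the path without violating planarity or the order at its anchor. A related problem is your claim that $v$ can be placed on the topmost layer: if the entire path sits on one row and that row is the topmost layer, then no component can be placed above $P$ at all, contradicting what the cyclic order may demand.

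The paper avoids exactly this obstruction by \emph{not} drawing $P$ on a single row. Instead it draws $P$ as a \emph{battlement curve}: the edges of $P$ alternate between full-height vertical segments (spanning all $k+2$ layers) and horizontal segments along the top or bottom layer. Each ``pocket'' between two consecutive verticals offers $k$ free layers, and every component --- regardless of which side of $P$ the cyclic order puts it on --- is merged into one of these pockets. Thus all components share the same middle $k$ layers, and the total height is $k+2$ rather than roughly $2k$. This battlement-curve merge is isolated as a separate lemma and is the key idea you are missing.

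Your treatment of the general case (no main path ending at $v$) is also insufficient. ``Use a single A-edge to connect $v$ to a main-path configuration one layer away'' does not explain how $v$ --- which may lie deep inside some component of $T-P$ --- ends up on an extreme layer. The paper handles this by choosing a different path $Q$ that starts at $v$, runs to the main path, and continues along part of it; the leftover piece of the main path sits in one special component that is allowed one extra layer, and the merging lemma is designed to accommodate exactly one such oversized component.
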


We first give an outline of the idea.  Exactly as in Suderman's construction
for his Lemma 7 \cite{Sud04}, we split the tree twice along paths before
recursing, choosing the paths such that they cover a main path and reach $v$.
All remaining subtrees then have 
pathwidth at most $pw(T)-1$, are hence drawn at most three units smaller
recursively, and can be merged into a drawing of these two paths.  The
main difference between our construction and Suderman's is that we must
respect the order, both within the merged subtrees and near the external
linkage-edge.  This requires a more complicated drawing for the path,
and more argumentation for why we have enough space to merge.

We phrase our main ``how to merge subtrees of a path'' as a lemma in terms
of an abstract height-bound $k$, so that we can use it twice for different
values of $k$. 
For one of these merges, it is necessary
to allow one component to be one unit taller than the others; the crux
to obtain the $3pw(T)$-bound is to realize that one such component can 
always be accommodated. 
Let $\chi(x)$ be an indicator function that is $1$ if $x$ is true and $0$ 
otherwise.

\begin{lemma}
\label{lem:mergeHVA}
Let $T$ be an ordered tree with an external linkage-edge $e_1=(v_0,v_1)$
with $v_1\in T$.  Let $P=v_1,\dots,v_l$ be a path in $T$, 
and let $C_S$ be one component of $T\setminus P$.    Fix an integer $k\geq 1$.

Assume that any component $C'$ of $T\setminus P$ has an $e'$-exposed
\hva-drawing on $k'$ layers, where $k'=k + \chi(C'{=}C_S)$ 
and $e'$ is the linkage-edge of $C'$.
Then $T$ has an $e_1$-exposed \hva-drawing on $k+2$ layers.
\end{lemma}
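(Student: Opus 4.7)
The plan is to draw the path $P$ on at most two adjacent layers at the top of the $(k+2)$-layer strip and to attach each component $C'$ of $T\setminus P$ as a pendant subtree extending away from its anchor, exploiting the \hva\ property that every edge is horizontal, vertical, or between adjacent layers. The structural enabler is the hypothesis that every $C'$ has an $e'$-exposed \hva-drawing on $k'$ layers, so its linkage-node lies in the top or bottom row of that drawing; by possibly rotating $C'$ by $180^\circ$, which preserves the \hva\ property, the layer count, and every cyclic edge order, we can orient $C'$ so that its linkage-node is on whichever side we need.

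First, I would place $v_1$ in the top layer so that $e_1$ can be drawn outward upward, securing $e_1$-exposure. Next, I would lay out $P = v_1, \ldots, v_l$ as a \battlementCurve\ between layers $1$ and $2$, choosing the local shape at each interior $v_i$ (horizontal, convex, or concave) in accordance with the cyclic edge order at $v_i$ in $T$: at each $v_i$ the attached components split angularly into a group above the two path-edges and a group below them, and the path's local shape is chosen so that this partition matches the one realized in the plane by the drawing above and below the path. I would then attach each $C'$ to its anchor $v_i$ by an adjacent-layer \hva\ \linkageEdgeName, placing the linkage-node one layer from $v_i$ in the chosen direction and letting $C'$'s drawing extend through the next $k'{-}1$ layers. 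Regular components ($k' = k$) anchored at layer $1$ fit in layers $2, \ldots, k+1$, and anchored at layer $2$ in layers $3, \ldots, k+2$; the special component $C_S$ (with $k' = k + 1$) is anchored at a layer-$1$ path node so that its drawing fits exactly in layers $2, \ldots, k+2$.

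The main obstacle is verifying that the \battlementCurve\ shape of $P$ can be chosen consistently so that every cyclic edge order at every $v_i$, including the cyclic order at $v_1$ in the super-tree that defined $e_1$, is realized simultaneously. Special care is needed at interior $v_i$ whose ``upper'' angular group is nonempty, since the region strictly above the path has only one layer of room, and for guaranteeing that $C_S$ can be anchored at a layer-$1$ node. I expect this to reduce to a case analysis at each $v_i$ driven by the layer of $v_i$, the cardinalities of its upper and lower angular groups, and the position of $e_1$ at $v_1$; the flexibility to flip each $C'$ via the $e'$-exposed hypothesis and to shift components horizontally along $P$ should suffice. Routine additional arguments then verify that consecutive components do not overlap horizontally, that every inserted edge is \hva, and that the overall drawing is planar.
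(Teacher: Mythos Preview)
Your plan has a genuine gap. By confining the \battlementCurve for $P$ to layers $1$ and $2$, you leave essentially no vertical room on the ``above'' side of the path. At an interior path node $v_j$, the two path edges $(v_j,v_{j-1})$ and $(v_j,v_{j+1})$ split the remaining incident edges into two cyclic arcs, and an order-preserving drawing must place the components of one arc on one side of the path and those of the other arc on the opposite side. No choice of local shape (horizontal, convex, concave) at $v_j$ can make both arcs point downward; whenever both arcs are nonempty, some component of height up to $k$ must be drawn above the path, where you have at most one layer. You flag exactly this as the ``main obstacle'' and hope that horizontal shifting and per-component flipping will suffice, but they cannot: those operations do not change which side of the path a component must occupy.

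The paper's proof avoids this by drawing the \battlementCurve between layer~$1$ and layer~$k{+}2$, i.e., spanning the full height. The vertical segments then separate the strip into pockets, and the regions ``above'' and ``below'' the curve at any $v_j$ are \emph{horizontally} separated pockets that both use the middle $k$ layers $2,\dots,k{+}1$, rather than being stacked vertically. Each component is slotted into the pocket dictated by the edge order and connected to its anchor on an adjacent extreme layer, which also keeps the drawing \hva. For the special component $C_S$, the paper first chooses the parity of the battlement so that the anchor $v_s$ of $C_S$ lands on the top layer; if the edge order sends $C_S$ below, the pocket there has an unused bottom layer giving $k{+}1$ layers, and if it sends $C_S$ above, the paper applies a \emph{reversal trick} (reverse all cyclic orders, draw, then flip horizontally) to reduce to the previous case. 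Your proposal fixes $v_1$ on the top layer for $e_1$-exposure and separately demands that $C_S$'s anchor be on layer~$1$, without a mechanism like the reversal trick to reconcile these constraints with the side on which $C_S$ must go.
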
	
\begin{proof}
We start by drawing path $P$
as a {\em battlement curve} on $k+2$ layers: Draw $(v_1,v_2)$ 
as a vertical line segment connecting the top and bottom layer,
and then alternate horizontal edges
(moving rightward) and vertical edges (to the other extreme layer).
We have a choice whether $v_1$ is in the top or bottom layer, and
do this choice such that the anchor-node $v_s$ of the special
component $C_S$ is drawn in the top layer.  Either way, $v_1$ is in
the top or bottom layer, and so $e_1$ is exposed as long as we merge
components while respecting edge-orders.

We think of the battlement curve as being extended at both ends 
with nodes $v_0$ and $v_{l+1},v_{l+2}$.  This is
done only to avoid having to describe special cases if $v_j=v_1$ or $v_j=v_l$
below;
the added edges are not included in the final drawing.  

	 \begin{figure}[t]
		\centering
		\includegraphics[page=1,width=0.9\linewidth]{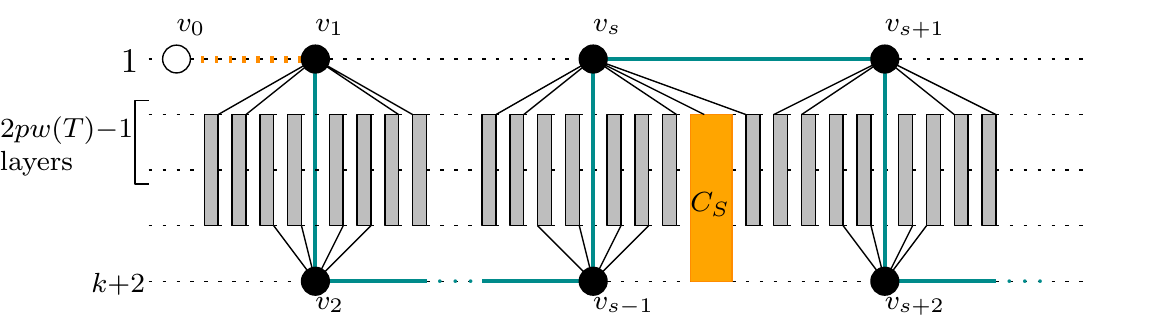}
	  	\caption{Merging at a path  (turquoise, thick) drawn as a battlement curve.}
		\label{fig:fig_3pwT_RI_plus_2}
	\end{figure}

For any component $C'$ of $T\setminus P$, the order of edges at its
anchor-node $v_j$ forces on which side
of the battlement curve $C'$ should be inserted.  More precisely,
$C'$ should be placed below the battlement curve if and only if the
linkage-edge of $C'$ appears after $(v_j,v_{j+1})$ but before $(v_j,v_{j-1})$
in the clockwise order of edges around $v_j$.  
For $j=1$, we use the edge $(v_0,v_1)=e_1$ for this choice;
since $e_1$ is drawn horizontally (because $(v_1,v_2)$ is vertical) the
edge-orders at $v_1$ are then as required for $e_1$-exposed.

Let us first assume that the drawing $\Gamma'$
of $C'$ has height at most $k$, as is the case for all components except $C_S$. 
Say $\Gamma'$ must be added below the battlement curve (adding it above
the battlement curve is symmetric).
The anchor-node $v_j$ of $C'$ is incident to a region below the battlement 
curve, say this is the region below $(v_h,v_{h+1})$ for some
$h\in \{j-2,j-1,j,j+1\}$.

Consider Fig.~\ref{fig:fig_3pwT_RI_plus_2}.  
The linkage-edge $e'$ of $C'$ is exposed in   $\Gamma'$, say it is
top-exposed and so the linkage-node of $C'$ is in the top layer.  If 
$v_j=v_h$ or $v_j=v_{h+1}$, then place $\Gamma'$ in the $k$ layers below the 
top; then $e'$ connects two adjacent layers and so  we obtain an 
HVA-drawing.   If $v_j=v_{h-1}$ or $v_j=v_{h+2}$, then
first rotate $\Gamma'$ by 180$^\circ$; this puts the linkage-node of $e'$
in the bottom layer of $\Gamma'$ and keeps all edge orders intact, and we can 
place
$\Gamma'$ in the $k$ layers above the bottom.  (We assume for this and all
later merging-steps that $\Gamma'$ has been shrunk horizontally sufficiently
so that this fits.)
If more than one component
is adjacent to $v_j$, then place these components
in the order dictated by the edge order at $v_j$.  
One easily verifies planarity,
that we have an HVA-drawing, and that the drawing is order-preserving.

It remains to explain how to deal with the special component $C_S$ whose
drawing may use $k+1$ layers.  The anchor-node $v_s$ of $C_S$ is drawn
in the top layer.  If the edge-order at $v_s$ is such that $C_S$ should be
drawn below the battlement curve, then we insert $C_S$ as before:
the bottom layer of the region below $(v_s,v_{s+1})$ is free to be
used for the drawing of $C_S$.
See Fig.~\ref{fig:fig_3pwT_RI_plus_2}.

If the edge order at $v_s$ dictates that $C_s$ should be above the 
battlement curve, then we apply 
the following {\em reversal trick}:
Let $T^{\rev}$ be the tree obtained from $T$ by reversing {\em all}
edge-orders at all nodes.  Each component can be drawn with the same
height as before, simply by flipping the drawing horizontally (which 
reverses all edges orders but keeps the linkage-edge exposed).  Apply the lemma to draw $T^{\rev}$; now the
edge order at $v_s$ is as desired.  Finally flip the drawing of $T^{\rev}$
horizontally to obtain a drawing of $T$ that satisfies all conditions.
\qed
\end{proof}

Now we are ready to prove Lemma~\ref{lem:3pw}.
We proceed by induction on $pw(T)$. In the base case, $pw(T) = 0$,
so $T$ is a single node that can be drawn on $1=3pw(T)+1$ layers;
the external linkage-edge is exposed automatically.

For the induction step, $pw(T) \geq 1$. Let $P$ be a main path of $T$,
choosing one that begins at $v$ if possible.  If $P$ does begin at $v$,
then apply Lemma~\ref{lem:mergeHVA} with this path $P$,
external linkage-edge $e_1:=e$ and $k=3pw(T)-2$.
(We have no need for a special component $C_S$ in this case.)    
Any component $C'$ of $T-P$ has pathwidth at most $pw(T)-1$,
and hence by induction can be drawn on $3(pw(T)-1)+1=3pw(T)-2= k$ layers 
with its linkage-edge exposed.  
Therefore $T$ can be drawn on $k+2=3pw(T)$ layers as desired.

Now assume that $P$ does not start at $v$, and let $R$ be the shortest path
in $T$ that starts at $v$ and ends at a node of $P$, say node $s$ is
common to $P$ and $R$.  Let $Q$ be the path containing $R$ and the part
of $P$ from $s$ to one of its ends, and let $S$ be the part of $P$ not in $R$.
See also Fig.~\ref{fig:fig_3pwT_RI_R_S_P}.

\begin{figure}[t]
\centering
\includegraphics[width=0.7\linewidth]{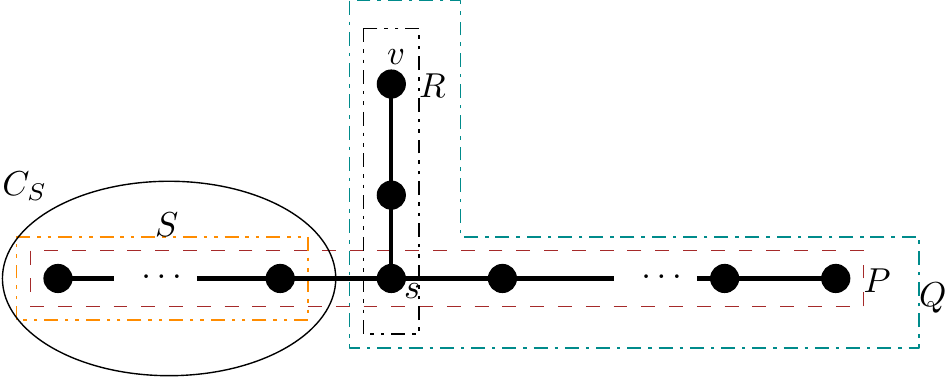}
\caption{Splitting the tree to obtain path $Q$.}
\label{fig:fig_3pwT_RI_R_S_P}
\end{figure}

Consider the components of $T-Q$.  Most of these have pathwidth at most 
$pw(T)-1$, and by induction can be drawn with height at most $3pw(T)-2$
with their linkage-edge exposed.
The one exception is the component $C_S$ that contains $S$,
which has pathwidth $pw(T)$.  Notice that $S$ is a main path of
$C_S$ that ends at the linkage-node of $C_S$, so applying
induction gives
a drawing of $C_S$ of height $3pw(C_S)=3pw(T)=k+1$ 
with its linkage-edge exposed.

Now apply Lemma~\ref{lem:mergeHVA} with path $Q$ (which ends at $v$
as required), $e_1:=e$,
$k=3pw(T)-1$, and using $C_S$ as the special component. 
This gives a drawing of $T$ of height
$k+2=3pw(T)+1$ that satisfies all properties and hence proves
Lemma~\ref{lem:3pw}.  We summarize:

\begin{theorem}
\label{thm_3pwT_RI_WxH}
Any ordered tree $T$ has an order-preserving planar straight-line \hva-drawing with height at most $\max\{1,\allowbreak 3pw(T)\}$ and width at most $|V(T)|$.
\end{theorem}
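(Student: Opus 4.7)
My plan has two parts, matching the two bounds in the theorem. For the height bound, I first dispose of the trivial case: if $T$ has a single node, placing it at $(1,1)$ gives a valid HVA-drawing of height $1 = \max\{1, 3pw(T)\}$ and width $1 = |V(T)|$. Otherwise $|V(T)| \ge 2$, so $pw(T) \ge 1$. I pick any main path $P$ of $T$, let $v$ be one of its endpoints, and invoke Lemma~\ref{lem:3pw} with an auxiliary external linkage-edge $e$ at $v$ (the exposedness conclusion is discarded). Since $P$ ends at $v$, the lemma produces an HVA-drawing of $T$ on $3pw(T) = \max\{1, 3pw(T)\}$ layers.

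For the width bound, my plan is to compress the drawing horizontally. Let $x_1 < x_2 < \cdots < x_m$ enumerate the distinct $x$-coordinates used by nodes in the drawing produced above; since each $x_i$ is held by at least one node and the nodes are distinct points of the plane, $m \le |V(T)|$. I then relabel each node's $x$-coordinate from $x_i$ to $i$, while keeping its $y$-coordinate (and hence its layer) fixed. The resulting drawing has integer coordinates, width at most $m \le |V(T)|$, and the same height as before.

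The main obstacle will be verifying that this monotone $x$-compression preserves the HVA property, the cyclic edge-orders, and planarity. The HVA property survives directly: horizontal edges keep their equal $y$-coordinates; vertical edges keep their equal $x$-coordinates, because $x_i = x_j$ iff $i = j$; and an adjacent-layer edge still spans two adjacent layers. The cyclic order of edges at every node is preserved, since a strictly monotone horizontal remapping cannot change which side (above, below, left, or right) a neighbor lies on. For planarity, I would invoke the standard fact that a strictly monotone horizontal remapping applied uniformly to all nodes sharing an original $x$-coordinate preserves every ``left of'' and ``strictly between'' relation in the plane, and therefore cannot turn two non-crossing straight-line segments into crossing ones. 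The one subtle point is that no node can be pushed onto the interior of a horizontal edge; but this would require a same-layer node whose $x$-coordinate lies strictly between the edge's two endpoints, which planarity of the original drawing already forbids and which strict monotonicity preserves.
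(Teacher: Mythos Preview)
Your height argument is correct and matches the paper exactly: dispose of the single-node case, then attach a dummy external linkage-edge at an endpoint of a main path and invoke Lemma~\ref{lem:3pw}.

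Your width argument has a genuine gap. The ``standard fact'' you invoke is false: a strictly monotone horizontal remapping of node $x$-coordinates does \emph{not} in general preserve planarity of a straight-line drawing. Preserving all ``left of'' and ``strictly between'' relations among $x$-coordinates is simply not enough to guarantee that two non-crossing segments stay non-crossing. For a concrete counterexample, take the segment from $(0,0)$ to $(4,4)$ and the segment from $(1,3)$ to $(3,3.5)$; the second lies strictly above the first, so they do not cross. Now apply the monotone remap $0\mapsto 0$, $1\mapsto 1$, $3\mapsto 3$, $4\mapsto 3.1$. The first segment becomes $(0,0)$--$(3.1,4)$, which at $x=3$ has height $\approx 3.87$, so the second segment now starts above it and ends below it: they cross. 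Your ``therefore'' step fails.

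What rescues the argument is precisely the HVA structure, which you do not use in your planarity step. The paper's proof observes that in an HVA-drawing every edge $(u,w)$ has the property that the axis-aligned bounding box of $u$ and $w$ either degenerates to the segment $\overline{uw}$ (horizontal or vertical edges) or has its interior strictly between two adjacent layers and hence empty of other nodes (adjacent-layer edges). This makes the drawing a \emph{rectangle-of-influence drawing}, and for those it is well known that one may change $x$-coordinates arbitrarily, as long as their relative order is preserved, without destroying planarity. You need either this observation or a direct case analysis over the three HVA edge types; the general claim you stated will not do.
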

\begin{proof}
The height-bound follows immediately from Lemma~\ref{lem:3pw},
because we can (for $pw(T)\geq 1$) insert a dummy-external-linkage-edge 
at the end of a main path.
It remains to argue the width.  Observe that for any edge $(u,w)$ in the
drawing, the minimum axis-aligned rectangle $R(u,w)$ containing $u$ and $w$
is either the line segment $\overline{uw}$, or its interior is between two
layers and contains no other nodes of the drawing.  Hence an \hva-drawing
is a {\em rectangle-of-influence drawing} (see e.g.~\cite{LLMW98}).  It 
is well-known that we can change the $x$-coordinates in such a drawing without
affecting planarity, as long as relative orders are preserved.  Thus, 
enumerate all node $x$-coordinates as $x_1,\dots,x_W$, and then assign
$x(w):=i$ if node $w$ had $x$-coordinate $x_i$.  This gives another
\hva-drawing which is planar by the above, and has width at most $|V(T)|$.
\end{proof}


\section{$\mathbf{2pw(T)+1}$-Layer Drawings of Ordered Trees}
\label{section_2pwT}
	
We now improve the number of layers, at the cost of not 
having a small upper bound on the width.
Our construction is very similar to the one of Suderman for his Lemma~19
\cite{Sud04}, except that we must be more careful when merging subtrees so
that the order is preserved.  There are two key differences to the construction
from the previous section:  (1) We split three times along paths, and
achieve that the resulting subtrees have pathwidth at most $pw(T)-2$.  
(2) In the top-level split, we do {\em not} require
that the path $P$ begins the node $v$ at which the external 
linkage-edge $e$ attaches.  That makes 
the top-level split much more efficient, but means that when recursing
in the sub-tree $C_v$ that contains $v$, we now must consider {\em two} 
external linkage-edges: edge $e$ and the linkage-edge from $C_v$ to $P$.
(We make one exposed and the other reachable.)
This will complicate the induction hypothesis (which is expressed in the
following lemmas) significantly.  

\begin{lemma}
\label{lem:2pw_one}
\label{lem:2pw_two}
Let $T$ be an ordered tree and $e$ be an \linkageExternalEdge.

(a) $T$ has a drawing on $2pw(T)+1$ layers that is $e$-exposed.

(b) Let $e'$ be a second \linkageExternalEdge that has no common endpoint
with $e$.  Then $T$ has a drawing on $2pw(T)+2$ layers 
that is $e$-exposed and $e'$-reachable.
\end{lemma}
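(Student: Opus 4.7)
The plan is to prove (a) and (b) simultaneously by induction on $pw(T)$, refining the approach of Lemma~\ref{lem:3pw} so that the main path consumes only two extra layers per recursion level instead of three. The base case $pw(T)=0$ reduces to a single node: (a) fits it on one layer with the external edge exposed by default, and (b) is vacuously satisfied because any two external edges at the unique node would share that node.

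For the inductive step, the central tool is a two-layer merge lemma analogous to Lemma~\ref{lem:mergeHVA} but using a more compact curve (the \zigZagCurve) to draw the path. Given a path $P=v_1,\dots,v_l$ with external edge $e_1=(v_0,v_1)$, we draw $P$ compactly on two adjacent layers and insert each component $C'$ of $T\setminus P$ above or below the curve as forced by the edge order at its anchor on $P$, using the exposed linkage-node of $C'$ to attach. If every component has an exposed drawing on at most $k$ layers, the merge produces an $e_1$-exposed drawing of $T$ on $k+2$ layers. We will also need variants allowing (i) one ``special'' component of height $k+1$ that shares an extremal layer with the zig-zag, and (ii) a designated component placed in the leftmost or rightmost column so that one of its nodes becomes the unique leftmost or rightmost node of the combined drawing.

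For part (a), fix a main path $P$ of $T$. If $v$ lies on $P$ (or $P$ can be rerouted to include $v$, as in Lemma~\ref{lem:3pw}), the basic merge suffices: components of $T\setminus P$ have pathwidth at most $pw(T)-1$ and by induction on (a) have exposed drawings on $2(pw(T)-1)+1=2pw(T)-1$ layers, so merging yields an $e$-exposed drawing on $2pw(T)+1$ layers. Otherwise let $C_v$ be the component of $T\setminus P$ containing $v$ with linkage-edge $e_{C_v}$; since $e$ and $e_{C_v}$ have disjoint endpoints in $C_v$, induction hypothesis (b) gives a drawing of $C_v$ on at most $2pw(C_v)+2\leq 2pw(T)$ layers that is $e$-exposed and $e_{C_v}$-reachable. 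Insert $C_v$ into the merge as the special component of case (i), aligning its reachable linkage-node with the appropriate endpoint of $P$ and drawing $e_{C_v}$ as a horizontal edge; the combined height is $2pw(T)+1$, and $v$ remains on the extremal layer of $T$, so $e$ is exposed. For part (b), given $e'=(v'_0,v')$ disjoint from $e$, perform the same case analysis but additionally route the component $C_{v'}$ containing $v'$ (or $v'$ itself if $v'\in P$) to the leftmost or rightmost column via variant (ii) of the merge; applying induction hypothesis (a) to $C_{v'}$ with external edge $e'$ gives an $e'$-exposed drawing of $C_{v'}$, and inserting it at the extreme column costs one additional layer, yielding $2pw(T)+2$ layers total while preserving $e$-exposure.

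The main obstacle is the tight layer budget in part (a) when $v\notin P$ and $pw(C_v)=pw(T)$: the special component essentially consumes the full $2pw(T)$ layers, leaving only a single layer of slack for the zig-zag to share. Making $e_{C_v}$ horizontal and compatible with the edge order at its anchor---possibly by horizontally reflecting $C_v$ via the reversal trick of Lemma~\ref{lem:mergeHVA}---is the most delicate piece, and motivates the three-split strategy alluded to in the paper, where a further split inside $C_v$ drops the pathwidth of every resulting sub-subtree to at most $pw(T)-2$ and affords enough room to place $e_{C_v}$'s endpoint where needed. A secondary complication in part (b) is aligning the side chosen for $C_{v'}$ with the side forced on $C_v$ by edge orders; when these conflict, a combination of reflection and the reachable-variant of the merge resolves the clash.
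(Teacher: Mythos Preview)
Your sketch has the right inductive skeleton but contains two concrete errors that break the argument. First, the \zigZagCurve in the merge lemma is \emph{not} drawn ``compactly on two adjacent layers''; it alternates between the \emph{top} layer and the \emph{bottom} layer of all $k+2$ layers. If $P$ occupied only two adjacent layers, components whose edge-order forces them above the curve would have no room, and the linkage-edges from components below would not in general reach their anchors while staying planar. Second, your ``main obstacle'' in part (a)---the case $v\notin P$ with $pw(C_v)=pw(T)$---cannot occur: $P$ is a \emph{main} path, so every component of $T\setminus P$ has pathwidth at most $pw(T)-1$. Thus induction hypothesis (b) already gives $C_v$ a drawing on $2(pw(T)-1)+2=2pw(T)$ layers, and no ``three-split inside $C_v$'' is needed here. (A separate subcase is needed when $v$ equals the linkage-node of $C_v$, since then $e$ and $e_{C_v}$ share an endpoint and (b) does not apply; the paper handles this by splitting $C_v$ at $v$ and drawing each piece with (a).)

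For part (b) your plan diverges from the paper and is underspecified. The paper does \emph{not} merge along the main path $P$ and then push $C_{v'}$ to an extreme column. Instead it builds a new path $Q$ that \emph{starts at $v'$}: take the shortest path $R$ from $v'$ to $P$ and append the half of $P$ that heads toward $v$. Since $v'$ is now the first node of the merging path, it is automatically the unique leftmost node, giving $e'$-reachability for free. The price is that the other half $S$ of $P$ becomes a component $C_S$ of $T\setminus Q$ with pathwidth $pw(T)$; this is the genuine obstacle, handled by drawing $C_S$ via Claim (a) on $2pw(T)+1$ layers and fitting it into a ``big area'' of the zig-zag (with a bespoke construction when the edge-order forces $C_S$ into a small area). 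Your proposal to ``insert $C_{v'}$ at the extreme column'' would instead require $C_{v'}$ to be simultaneously $e'$-reachable and linkage-edge-exposed, and would still leave $v'$ buried inside a region bounded by other parts of the drawing rather than globally leftmost; you have not explained how to guarantee that, nor why it costs only one extra layer.
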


This lemma will be proved by induction on the pathwidth.  For the induction
step, we need to merge components into a drawing of a path.    Since this
will be done repeatedly with different paths, 
we phrase this merging-step as a lemma (which is similar to 
Lemma~\ref{lem:mergeHVA} but with more complicated conditions), phrasing
the height-bound as an abstract constant $k$.

	\begin{lemma}
	\label{lem:2pwMerge}
Let $T$ be an ordered tree with an \linkageExternalEdge $e_1=(v_1,v_0)$ with
$v_1\in T$.  Let $P=v_1, \dots, v_l$ be a path of $T$ starting at $v_1$.
Let $e_v=(v,u)$ be some other \linkageExternalEdge with $v\in T$.
Fix some $k\geq 1$.

Assume that every component $C'$ of $T\setminus P$ that is not $C_v$ (defined below) can be drawn on $k$ layers with its linkage-edge exposed.
Assume further that one of the following conditions holds:

\begin{enumerate}
\item $v=v_i$ for some $i > 1$, or  

\item
$v \notin P$, and the component $C_v$ of $T\setminus P$ that contains $v$ 
has a drawing on $k+1$ layers that is $e_v$-exposed and $e_C$-reachable, 
where $e_C$ is the linkage-edge of $C_v$.

\item 
$v\notin P$, and the linkage-edge $e_C$ of the above component $C_v$ is incident to $v$.
Every component $C''$ of $C_v\setminus \{v\}$ has
a drawing on $k$ layers such that the edge connecting $C''$ to $v$ is exposed.
\end{enumerate}
Then $T$ has a drawing on $k+2$ layers that is $e_v$-exposed and
$e_1$-reachable.
\end{lemma}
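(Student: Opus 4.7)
My strategy is to adapt the proof of Lemma~\ref{lem:mergeHVA}: I will again draw $P$ as a battlement-style curve spanning all $k+2$ layers and then slot the components of $T\setminus P$ into the rectangular regions it leaves free. There are two crucial modifications. First, I will place $v_1$ alone in a dedicated leftmost column so that $v_1$ becomes the unique leftmost node of the final drawing, which yields $e_1$-reachability. Second, I will choose the phase of the battlement so that $v$, or the region of the drawing where $v$ will sit after the component $C_v$ is inserted, lies in the top or bottom extreme layer, which yields $e_v$-exposure.

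To draw the path I would place $v_1$ in column $1$ at one of the two extreme layers, then start the battlement for $v_2,\ldots,v_l$ in column $2$ at the opposite extreme, connecting $v_1$ and $v_2$ by a diagonal. The remainder of the battlement alternates horizontal and vertical edges exactly as in Lemma~\ref{lem:mergeHVA} and uses only columns $\geq 2$, so $v_1$ stays the unique node in column~$1$. The phase (which extreme layer $v_1$ lies in, and the parity of the battlement beyond it) is fixed according to the case: in case~1 so that $v_i$ lands in the top layer; in cases~2 and~3 so that the anchor $v_j$ of $C_v$ sits at an extreme of a horizontal battlement segment adjacent to a region of $k+1$ layers that can host $C_v$.

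Next I would insert every component $C'$ of $T\setminus P$ other than $C_v$ exactly as in Lemma~\ref{lem:mergeHVA}: its $e'$-exposed drawing on $k$ layers goes into the region above or below a horizontal battlement segment near its anchor-node, with a $180^\circ$ rotation if the linkage-node must line up with the bottom of the region rather than the top. The reversal trick (reverse all edge orders, redraw, then horizontally flip) handles components whose edge order at the anchor forces them to the ``wrong'' side. None of these components enters column~$1$, so $v_1$'s uniqueness as leftmost is preserved. For case~2, I would insert $C_v$ into the enlarged $(k+1)$-layer region adjacent to $v_j$, using the $e_C$-reachability of $C_v$ to attach $e_C$ as a short edge from $v_j$ to the leftmost or rightmost node of $C_v$, and using $e_v$-exposure together with an appropriate flip or rotation to place $v$ in the extreme layer opposite $v_j$. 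For case~3, I would place $v$ itself in the extreme layer adjacent to $v_j$ (connected by a short edge $e_C$) and then merge the sub-components of $C_v\setminus\{v\}$ around $v$ using the same insertion rules, treating $v$ as a degenerate one-node sub-path.

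The hardest step will be case~2: I must simultaneously control the battlement phase so that $v_j$ is adjacent to a region of $k+1$ layers, flip or rotate $C_v$ so that $e_v$-exposure places $v$ in the extreme opposite $v_j$ while $e_C$-reachability lets $e_C$ attach cleanly to $v_j$, and preserve the cyclic edge order at $v_j$. Throughout all insertions I also have to verify that no component ever encroaches on column~$1$, which is what keeps $v_1$ unique and hence makes the final drawing $e_1$-reachable.
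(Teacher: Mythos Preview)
Your overall strategy---lay out $P$ across all $k+2$ layers, slot the generic components into the strips it leaves, and handle $v$ (resp.\ $C_v$) specially---is exactly the paper's plan, so the approaches are close. The paper, however, draws $P$ as a \emph{zig-zag curve} (every path-edge a diagonal between the two extreme layers) rather than the battlement of Lemma~\ref{lem:mergeHVA}; its footnote explains that the zig-zag gives more room for components because every $v_j$ is the apex of two triangular regions on each side. This matters in Condition~1: at $v=v_i$ the paper splits the incident components into \emph{three} groups according to where the linkage-edge sits among $(v_i,v_{i-1})$, $e_v$, and $(v_i,v_{i+1})$, and places one group in the region above $(v_i,v_{i-1})$. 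Your plan only asserts ``choose the phase so that $v_i$ lands in the top layer''; you would still need to argue how the battlement provides two separated ``above'' regions flanking the outgoing $e_v$, and that the left one never reaches column~$1$ when $i=2$.

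There is also a genuine gap in how you invoke the reversal trick. You write that it ``handles components whose edge order at the anchor forces them to the wrong side'', but ordinary components have no wrong side: each goes above or below the curve exactly as its edge order dictates, and both sides have $k$ usable layers. The reversal trick is a \emph{single global} flip, applied once (if at all) to force one specific event---that $e_C$ lands clockwise between $(v_i,v_{i+1})$ and $(v_i,v_{i-1})$ so that $C_v$ falls into the large region below $(v_i,v_{i+1})$ (Conditions~2--3), or that $e_v$ sits clockwise between $(v_i,v_{i-1})$ and $(v_i,v_{i+1})$ (Condition~1). Using it per-component is not possible, and as stated your plan would not guarantee that $C_v$ ends up in the one region tall enough to hold its $(k+1)$-layer drawing. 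Fixing this is exactly the point where the paper's proof spends its effort.
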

\begin{proof}
The first step is to draw $P$ on $k+2$ layers as a \zigZagCurve%
\footnote{Using a \zigZagCurve allows more flexibility in placing components, but means that we will not have an \hva-drawing.}
 between the top and the bottom layer, with $v_1$ leftmost.  
With this $e_1$ is the unique leftmost node and hence reachable as long as we merge components suitably. 
For ease of description, we think of the zig-zag-line as extended further left
and right with vertices $v_0$ and $v_{l+1}$; these will not be in the final
drawing.

We have the choice of placing $v_1$ in the top or in the bottom layer, and
do this as follows:  Define $v_i$ to be $v$ if $v\in P$, and 
to be the anchor-node of $C_v$ if $v\not\in P$.  Choose the placement of
$v_1$ such that $v_i$ is in the top layer.

%
The following details the {\em standard-method} of merging a
component $C'$ anchored at $v_j\in P$. 
See also Fig.~\ref{fig:fig_2pwT_plus_2}.
Assume that $v_j$ is in the top layer; the other case
is symmetric.  
Assume that the linkage-edge of $C'$ was top-exposed in the drawing
$\Gamma'$ of $C'$; else rotate $\Gamma'$ by $180^\circ$ to make it so.
Scan the edge-order around $v_j$ to find the two incident
path edges $(v_j,v_{j+1})$ and $(v_j,v_{j-1})$.  If the linkage-edge of
$C'$ appears clockwise between these two, then place 
$\Gamma'$ below edge $(v_j,v_{j+1})$, else place it above
$(v_j,v_{j+1})$.  In both cases, we do not use the top layer for $\Gamma'$,
and can hence connect to the linkage-node of $C'$ while preserving planarity
and edge-orders since the linkage-edge was top-exposed.
If multiple components are anchored at $v_j$, then
we all place them in this region, in the order as dictated by the edge-order
at $v_j$.

	 \begin{figure}[placement h]
		\centering
		\includegraphics[page=2,width=0.9\linewidth]{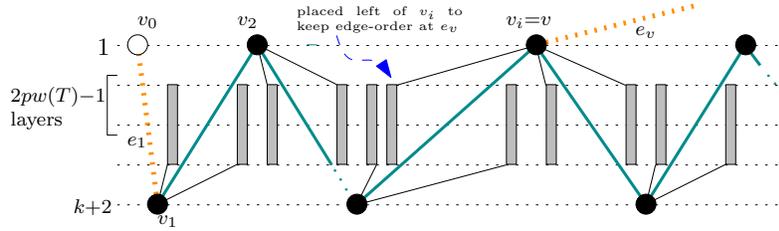}
	  	\caption{Adding components to a zig-zag path for Lemma~\ref{lem:2pwMerge}.}
		\label{fig:fig_2pwT_plus_2}
	\end{figure}

Now we show how to make $e_v$ exposed, depending on which
condition applies.

\medskip	
(1) We know that $v=v_i$ for some $i>1$ and $v_i$ is in the top layer.
After applying the reversal-trick, if needed, we may 
assume that the clockwise order
at $v{=}v_i$ in the super-tree
contains $(v,v_{i-1})$, then $e_v$, then $(v,v_{i+1})$.
Therefore, drawing $e_v$ upward from $v_i$ makes it top-exposed
as long as we merge components suitably.

Merge all components not anchored at $v_i$ with the standard-method.
For a component $C'$ anchored at $v_i=v$, the placement must be such
that the order including edge $e_v$ is also respected.  
This is done as follows (see also Fig.~\ref{fig:fig_2pwT_plus_2}):
Determine
where the linkage-edge of $C'$ falls in the clockwise order around $v$.
If it is between $e_v$ and $(v_i,v_{i+1})$, or between $(v_i,v_{i+1})$
and $(v_i,v_{i-1})$, then place $C'$ with the standard-method.  
But if it is between
$(v_i,v_{i-1})$ and $e_v$, then place the drawing of $C'$
in the region above
edge $(v_i,v_{i-1})$ (and to the right of any components anchored at
$v_{i-1}$ that may also have been placed there).  
By $i>1$, this does not place anything to the
left of $v_1$, and so $v_1$ continues to be $e_1$-reachable.

\medskip
(2) and (3):
Recall that the anchor-node $v_i$ of $C_v$ is drawn in the top layer.
Apply the reversal-trick, if needed, to ensure that $e_C$
appears between $(v_i,v_{i+1})$ and $(v_i,v_{i-1})$ in clockwise order
around $v_i$.

For (2), assume (after possible rotation)
that the drawing $\Gamma_v$ of $C_v$ is bottom-$e_v$-exposed.
Insert $\Gamma_v$ in the region below $(v_i,v_{i+1})$.
This is 
possible (after skewing $\Gamma_v$ as needed) without crossing,
since the end of $e_C$ in
$C_v$ is the unique leftmost or rightmost node of $\Gamma_v$.
		See Fig.~\ref{fig:fig_2pwT_special_construction}.

For (3), place $v$ on the bottom layer, in the area below 
edge $(v_i,v_{i+1})$, and connect it to $v_i$.  This makes $e_v$ 
bottom-exposed, as long as we are careful when placing components
of $C_v\setminus \{v\}$.  For each such component $C''$, we have a drawing
$\Gamma''$ on $k$ layers where the linkage-edge from $C''$ to $v$ is exposed.
Rotate $\Gamma''$, if needed, to make this edge bottom-exposed,
and then place $\Gamma''$ in the $k$ layers above $v$,
either left or right of edge $(v_i,v)$,
as dictated by the edge-order around $v$.  
		See Fig.~\ref{fig:fig_2pwT_special_construction}.

For both (2) and (3),
all other components $C'$ of $T\setminus P$ 
are merged with the standard-method.
This includes any other components that may be anchored at $v_i$; for
those we place them so that they are left/right of $C_v$ as dictated
by the edge-order, but still remain in the region below $(v_i,v_{i+1})$
to ensure that $v_1$ remains the unique leftmost node.
\qed
\end{proof}

	\begin{figure}[placement h]
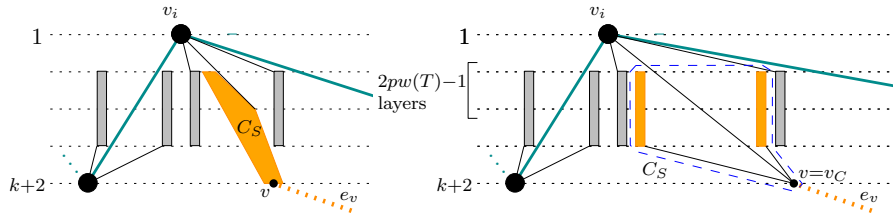

\includegraphics[page=3,height=29mm,trim=0 0 150 0,clip]{2pw_merge.pdf}
\includegraphics[page=4,height=29mm,trim=0 0 110 0,clip]{2pw_merge.pdf}
	  	\caption{Merging component $C_v$ if condition 2 (left) or 3 (right) holds.}
		\label{fig:fig_2pwT_special_construction}
	\end{figure}

We are now ready to give the proof of 
Lemma~\ref{lem:2pw_one}.
We proceed by induction on $pw(T)$. In the base case, let $pw(T)=0$. 
Hence, $T$ is a single node and drawing $T$ on a single layer satisfies 
Claim (a).
Claim (b) is vacuously tree since any two external linkage-edges would
have the (unique) node of $T$ in common.

For the induction step let $pw(T) \geq 1$ and let $P=v_1,\dots,v_l$ be a 
main path of $T$.  Any component $C'$ of $T\setminus P$ has pathwidth 
at most $pw(T)-1$ and hence can be drawn on $2pw(T)-1$ layers with its
linkage-edge exposed by induction (Claim (a)).
For some components
we will create different drawings later to accommodate external linkage-edges.

\medskip\noindent{\bf Induction step for Claim (a):}
We distinguish cases by where the end $v$ of external linkage-edge $e$
is located.  First assume that $v\in P$.  Then 
we merge with Lemma~\ref{lem:2pwMerge} (Condition 1) using path $P$, $k=2pw(T)-1$ and $e_v:=e$.  (Use a dummy-edge at an end of $P$ as $e_1$.)
All components were drawn on at most $2pw(T)-1$ layers with their linkage-edge
exposed, so this gives a drawing on $2pw(T)+1$ layers with $e_v$ exposed.

If $v \notin P$, then let $C_v$ be the component of $T \setminus P$ that contains $v$ and let $e_C$ and $v_C$ be its linkage-edge and linkage-node. We know that 
$C_v$ has pathwidth at most $pw(T)-1$. 
If $v_C \neq v$, then
apply induction (Claim (b))
to get a drawing of $C_v$ on $2pw(T)$ layers
that is $e_v$-exposed and $e_C$-reachable. 
If $v_C = v$, then observe that any component $C''$ of $C_v\setminus \{v\}$
has pathwidth at most $pw(C_v)\leq pw(T)-1$, and by induction hence has a 
drawing
on $2pw(T)-1$ layers such that the edge from $C''$ to $v$ is exposed.  
We can hence
apply Lemma~\ref{lem:2pwMerge} (Condition 3 or 4) for path $P$,
a dummy-edge $e_1$ and $k=2pw(T)-1$ to
get the result. \qed

\medskip\noindent{\bf Induction step for Claim (b):}
Recall that $P = v_1, \dots, v_l$ is a main path of $T$ and $v'$ is the 
endpoint of edge $e'$ that should be reachable.
We now split $T$ along some paths derived from $P$ and $v'$ such that we can 
apply one of the conditions of Lemma~\ref{lem:2pwMerge}.%
\footnote{This choice of paths is the same as in Suderman,
Lemma 23, though we combine the drawings of the subtrees quite
differently to maintain edge orders.}

Fig.~\ref{fig:fig_2pwt_rel_P_L_LPrime} illustrates the following definitions.
Let $R$ be the path in $T$ from $v'$ to the nearest node of $P$; say $R$ ends at $v_s$ (possibly $v_s=v'$ and $R$ is empty). This splits $P$ into two parts $v_1, \dots, v_s$ and $v_s, \dots, v_l$. Now also consider the path $R'$ from $v'$ to $v$ (the endpoint of edge $e$ that we wish to be exposed). If $R'$ uses $v_{i+1}$ then set $Q = R \cup \{v_{i+1}, \dots, v_l\}$. If $R'$ uses $v_{i-1}$, then set $Q = R \cup \{v_1, \dots, v_{i-1}\}$. If $R'$ uses neither, then set $Q$ to any of those two. Let $S$ be the ``rest'' of $P$ not covered by $Q$, i.e., $S = \{v_1, \dots, v_{i-1}\}$ or $S = \{v_{i+1}, \dots,v_l\}$.  

The goal is to use $Q$ as the path for merging with Lemma~\ref{lem:2pwMerge}.  
However, if $S$
(the ``rest'' of $P$) is non-empty, then this is not straightforward,
because the component $C_S$ of $T\setminus Q$ that contains $S$ has pathwidth
$pw(T)$ and so is not necessarily drawn small enough.  

\begin{figure}[placement h]
\hspace*{\fill}
\includegraphics[width=0.7\linewidth]{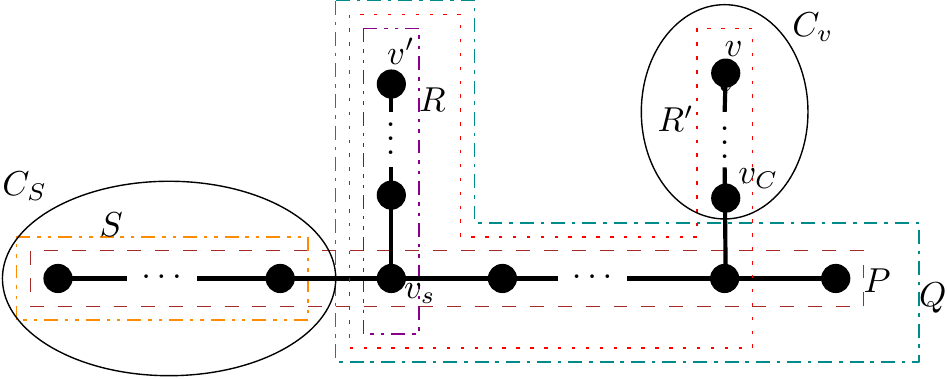}
\hspace*{\fill}
\caption{Splitting the tree to obtain path $Q$.}
		\label{fig:fig_2pwt_rel_Q_S_W}
		\label{fig:fig_2pwt_rel_P_L_LPrime}
	\end{figure}

\medskip\noindent{\bf Case 1:} $S=\emptyset$, i.e., $C_S$ is undefined.
Use Lemma~\ref{lem:2pwMerge} with $Q$ as the path, $e_1:=e'$, $e_v:=e$,
and $k=2pw(T)$.%
\footnote{For Case 1, $k=2pw(T)-1$ would have been enough, but later
cases build on top of this and then require $k=2pw(T)$.}
We must argue that this is feasible.  First,
any component $C'$ of $T\setminus Q$ has pathwidth at most
$pw(T)-1$ since $S$ is empty and so $Q$ covers the entire main path $P$.
So $C'$ has by induction (Claim (a)) a drawing on $2pw(T)-1\leq k$ layers with
its linkage-edge exposed.

If $v\in Q$ then Condition 1 holds (we know $v\neq v'$ since $e$ and $e_v$
have no end in common). 
If $v\not \in Q$ then let $C_v$ be the component of $T\setminus Q$ that 
contains $v$, and let $e_C$ and $v_C$ be its linkage-edge and linkage-node.  
We have $C_v\neq C_S$ since we chose $Q$ suitably.  
Therefore $pw(C_v)\leq pw(T)-1$.  If $v\neq v_C$, then
use induction (Claim (b)) to obtain a drawing of $C_v$
on $2pw(T)\leq k+1$ layers such that $e_v$ is exposed and $e_C$ is reachable.
So Condition 2 holds.  Finally if $v=v_C$, then any
component $C''$ of $C_v\setminus \{v\}$ has pathwidth at most 
$pw(C_v)\leq pw(T)-1$ and  by induction (Claim (a))
$C''$ can be drawn on 
$2pw(T)-1\leq k$ layers such that edge from $C''$ to $v$ is exposed.
So Condition 3 holds.  Hence regardless of the location of $v$
we obtain a drawing of
$T$ on $k+2=2pw(T)+2$ layers with $e'$ reachable and $e$ exposed.

\medskip\noindent{\bf Case 2:} $C_S$ is non-trivial, but ``belongs into a
big area'' (defined below).   Construct a drawing of $T-C_S$
as in Case 1.
We say that $C_S$ {\em belongs in the big area} if the anchor-node
of $v_S$ is in the top [bottom] layer and the clockwise [counter-clockwise]
order of edges around $v_s$ contains $(v_s,v_{s+1})$, then the linkage-edge 
of $C_S$, and then $(v_s,v_{s-1})$.  
Put differently, belonging to the big area means that the drawing of $C_S$ 
needs to be put  into a region that
has $2pw(T)+1$ levels that can be used for inserting drawings.
Construct a drawing $\Gamma_S$ of $C_S$ with its linkage-edge exposed on 
$2pw(T)+1$ layers with Claim (a).  We can insert $\Gamma_S$ with the 
standard-method for merging components since $C_S$ belongs into a big area.

\medskip\noindent{\bf Case 3:} $C_S$ is non-trivial, and does not belong
into a big area.    In this case we need a special construction to
accommodate $C_S$.%
\footnote{Because we already use the reversal-trick inside 
Lemma~\ref{lem:2pwMerge}, we cannot apply it here again.}
Let $T^-$ be the tree that results from removing
from $T$ the component $C_S$, as well as all components of $T-Q$
that are anchored at $v_s$ (the anchor-node of $C_S$).
We first construct a drawing of $T^-$ on $2pw(T)+2$ layers as in Case 1.    
Assume that $v_s$ is in the top level; the other case is symmetric.
We know that $C_S$ does not belong to a big area, so it should normally
be placed above edge $(v_s,v_{s+1})$ to preserve edge-orders.  (In the
special case that $v_s=v$, it may have to be placed above edge $(v_s,v_{s+1})$
istead to preserve edge-orders for $e_v$; this can be handled in a symmetric
fashion.)

Observe that $S$ is a main path of $C_S$.
We draw $S$ as a zig-zag-curve alternating between layer $1$ and layer 
$2pw(T)+1$, going rightwards from $v_s$.   
See Fig.~\ref{fig:fig_2pwT_insert_C_S}
Any component $C''$ of $C_S\setminus S$ has pathwidth at most
$pw(T)-1$, and can hence be drawn inductively (Claim (a)) on
$2pw(T)-1$ layers with its linkage-edge exposed.  We can hence merge
these components in the regions around $S$, exactly as in
Lemma~\ref{lem:2pwMerge}.
Finally we must merge a component $C'$ anchored at $v_s$.
If this component came (in the clockwise order around $v_s$) before
the linkage-edge of $C_S$, then 
path $S$ now blocks the connection to where we would normally
place $C'$.    (All other components at $v_s$ can be
merged with the standard-construction.)  We know that
$C'$ can be drawn with $2pw(T)-1$ layers.  Since the linkage-node of $C_S$
is placed on layer $2pw(T)+1$, we can place $C'$ in the $2pw(T)-1$ layers
below the top-row and above the linkage-edge and connect it to $v_s$ without
violating planarity and respecting edge-orders.

	 \begin{figure}[placement h]
		\centering
		\includegraphics[page=5,width=0.8\linewidth]{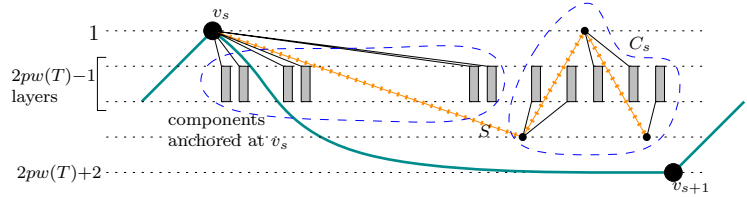}
	  	\caption{The special constructiono for component $C_S$ if it does not belong to a big area.  We draw $(v_s,v_{s+1})$ slightly curveed to avoid 
having to scale too much.}
		\label{fig:fig_2pwT_insert_C_S}
	\end{figure}

This special construction for $C_S$ does
{\em not} interfere with the (potentially special) construction for
component $C_v$ (presuming $v\not\in Q$), because
we had ensured (by using the reversal-trick, if needed) that $C_v$ belongs to
a big area. So either $C_v$ is in a different area altogether, or $C_v$
is anchored at $v_{s+1}$, and we easily keep these drawings separate.

This finishes the proof of Lemma~\ref{lem:2pw_two}.
By applying Lemma~\ref{lem:2pw_one}(a) with an arbitrary
dummy-edge as external linkage-edge, we hence obtain:
	
\begin{theorem}
\label{thm_2pwT}
\label{thm:2pw}
Any tree $T$ has a planar straight-line 
order-preserving drawing on $2pw(T) + 1$ layers.
\end{theorem}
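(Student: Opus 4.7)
The plan is to derive Theorem~\ref{thm:2pw} directly from Lemma~\ref{lem:2pw_one}(a), since that lemma already delivers the desired height bound $2pw(T)+1$ as soon as an external linkage-edge is in the picture. The only gap between the lemma and the theorem is that the theorem speaks about $T$ in isolation, with no external linkage-edge mentioned, so the first step is to manufacture one.

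Concretely, I would pick an arbitrary node $v$ of $T$ and designate a dummy external linkage-edge $e=(v,u)$, where $u$ is a fictitious node sitting outside $T$ in an imagined super-tree. We may insert $e$ at any position in the cyclic edge-order around $v$ (and if $v$ is a leaf, the position is essentially immaterial). With $e$ thus fixed, Lemma~\ref{lem:2pw_one}(a) immediately produces a planar, straight-line, order-preserving drawing of $T$ on $2pw(T)+1$ layers that is in addition $e$-exposed.

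To finish, I would simply discard $e$. The $e$-exposedness property is a constraint only on the position of $v$ (it lies on the top or bottom layer) and on the direction in which $e$ would have been drawn; it does not modify the cyclic order of the genuine edges of $T$ at $v$. Hence removing $e$ from the output yields a drawing of the original ordered tree $T$ that is still planar, straight-line, order-preserving, and placed on $2pw(T)+1$ layers.

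I do not anticipate any real obstacle here: the entire technical weight of the argument has already been borne by Lemma~\ref{lem:2pw_one}(a) (and, through it, by Lemma~\ref{lem:2pwMerge}). The only small bookkeeping point worth flagging is that the dummy edge $e$ must be treated as genuinely external to $T$ throughout—so that all inductive invocations go through cleanly—and then dropped at the very end; it is never actually drawn, but its presence at $v$ is what unlocks the height bound of the lemma.
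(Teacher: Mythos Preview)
Your proposal is correct and matches the paper's own proof essentially verbatim: the paper simply states that the theorem follows by applying Lemma~\ref{lem:2pw_one}(a) with an arbitrary dummy external linkage-edge, which is exactly what you do (with a bit more justification for why discarding the dummy edge afterwards is harmless).
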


Note that we make no claims on the width of the drawing.  In fact,
in order to fit drawings of components within the regions underneath
zig-zag-lines, we may have to scale these components horizontally
(or equivalently, widen the zig-zags significantly).  

%
We can show that the bound in Theorem~\ref{thm:2pw} is tight.  
Define an ordered tree $T_i$ recursively as follows.  $T_0$ consists
of a single node.  $T_i$ for $i>0$ consists of a path $v_1,v_2,v_3$
and 12 copies of $T_{i-1}$, three attached at each of $v_1,v_3$,
and three attached on each side of the path at $v_2$.
See also Fig.~\ref{fig:fig_2pwt_min_ti}.  By using $v_1,v_2,v_3$ as
main path one sees that $pw(T_i)\leq i$.
The following will be shown in the appendix.

\begin{figure}[placement h]
	\centering
\includegraphics[page=3,width=0.6\linewidth]{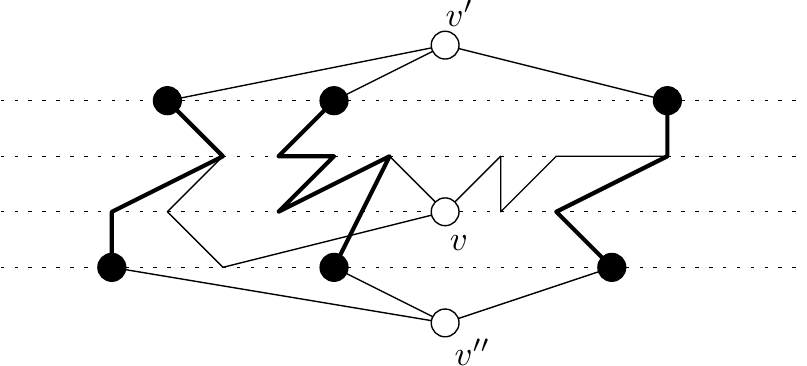}
	\caption{Tree $T_i$ has pathwidth $i$ but requires $2i+1$ layers in an order-preserving planar drawing.}
	\label{fig:fig_2pwt_min_ti}
\end{figure}

	\begin{theorem}
	\label{lem_tree_2pwt_layers}
	\label{thm:lower}
Any planar order-preserving drawing of $T_i$ has at least $2pw(T_i){+}1 = 2i{+}1$ layers.
	\end{theorem}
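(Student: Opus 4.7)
I would prove this by induction on $i$. The base case $i=0$ is immediate since $T_0$ is a single node, which requires one layer. For the inductive step, assume every planar order-preserving drawing of $T_{i-1}$ uses at least $2i-1$ layers, and suppose for contradiction that $T_i$ admits an order-preserving planar drawing $\Gamma$ on at most $2i$ layers. By the inductive hypothesis each of the twelve copies of $T_{i-1}$ inside $T_i$ spans at least $2i-1$ of the $2i$ available layers.

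The central ingredient would be a \emph{sandwich lemma}: whenever a node $u$ has three copies $X_1,X_2,X_3$ of $T_{i-1}$ in three consecutive positions of its cyclic edge order, the middle copy $X_2$ is trapped in a bounded face-region whose boundary is formed by portions of $X_1$, of $X_3$, and of $u$'s remaining incident edges. Since each of $X_1,X_2,X_3$ spans at least $2i-1$ of the $2i$ layers, a topological-counting argument would show that $X_2$ can only fit without crossing its neighbours if $u$ sits on the topmost or bottommost layer of $\Gamma$ and $X_2$ fills a complete strip of $2i-1$ layers strictly on the opposite side of $u$.

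I would then apply the sandwich lemma to the four triples in $T_i$: one at $v_1$, one at $v_3$, and two at $v_2$ (one on each side of the path $v_1 v_2 v_3$ in the cyclic order around $v_2$). The first two applications place both $v_1$ and $v_3$ on extreme layers, and either application at $v_2$ places $v_2$ on an extreme layer. Because the two triples at $v_2$ lie on geometrically opposite sides of the path in the plane, in any drawing consistent with the cyclic order at $v_2$ one of the two triples is forced into the angular sector that extends away from the drawing into the empty half-plane above (or below) the extreme layer holding $v_2$. That sector contains no layers at all, leaving no room for a middle subtree that still requires $2i-1$ layers, which is the desired contradiction.

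The main obstacle is proving the sandwich lemma rigorously: the informal ``trapped in a region'' argument must be sharpened into a precise statement about which layers $X_2$ can occupy. Concretely one has to show that the two neighbouring copies $X_1$ and $X_3$ each use the full top-to-bottom strip of $2i-1$ layers in a nested fashion around $u$, and that $X_2$'s $2i-1$ layers cannot touch the layer containing $u$ except through its linkage-edge. A likely route is to prove that any copy drawn on $2i-1$ layers with its linkage-node at a specified extreme position has positive horizontal cross-section at every non-extreme layer; combining this with planarity would then force $X_2$ into a single gap of height $2i-1$ on the opposite side of $u$ and pin $u$ to an extreme layer, as required.
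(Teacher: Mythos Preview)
Your inductive skeleton is right, and your instinct that the three-copies-at-a-node configuration is the engine is correct. But the proposal has two genuine gaps.

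First, the sandwich lemma is both stronger than you need and not established by your sketch. You want to conclude that $u$ lies on an extreme layer, yet nothing in ``each copy spans $2i-1$ of the $2i$ layers'' forces this: with $u$ on an interior layer one can still lay the three copies out side by side (each occupying, say, layers $2$ through $2i$) without any obvious topological obstruction. Your suggested route via ``positive horizontal cross-section at every non-extreme layer'' does not pin down which layers $X_2$ may use, and in particular does not rule out all three copies living in the same $2i-1$-layer band on one side of $u$. The paper sidesteps this by proving a weaker helper statement with a sharper tool: the tree $H_i$ consisting of one node plus three copies of $T_{i-1}$ cannot be drawn on $2i-1$ layers at all, because in such a drawing each copy would contain a \emph{blocking path} from the top to the bottom layer, and together with the centre and two auxiliary apex nodes these three blocking paths form a planar subdivision of $K_{3,3}$.

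Second, your final contradiction at $v_2$ does not go through as written. Even if $v_2$ is on the top layer, the ``upper'' triple is not literally forced into the empty half-plane above layer~1: the linkage-edges can leave $v_2$ at steep downward-left or downward-right angles and still respect the cyclic order, so the angular-sector picture alone does not yield a contradiction. The paper instead takes a path $P$ from a leftmost to a rightmost node of the drawing; any subtree disjoint from $P$ is then confined to the top or the bottom $2i-1$ layers. This forces $P$ to pass through $v_1,v_2,v_3$ (otherwise an $H_i$ at $v_1$ or $v_3$ would sit in $2i-1$ layers), and then order-preservation at $v_2$ puts one triple of $T_{i-1}$'s entirely on one side of $P$. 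Together with $v_2$ (which is not on the far layer on that side) this is an $H_i$ drawn in $2i-1$ layers, contradicting the helper lemma. The leftmost--rightmost path is the missing idea that turns your intuition into a proof.
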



\section{Conclusion and Open Problems}
In this paper, we studied planar straight-line order-preserving drawings of trees that use few layers. Inspired by techniques of Suderman~\cite{Sud04}, we gave two constructions.  The first one is an asymptotic 3-approximation for the height and the width is bounded by $n$.    The second is an asymptotic 2-approximation for the height, with no bound on the width.  We also showed that `2' is tight if one uses the pathwidth for lower-bounding the height.

As for open problems, all our constructions (and all the ones by Suderman) rely on path decompositions, and hence yield only approximation algorithms to the height of tree-drawings. The algorithm for optimum-height (unordered) tree-drawings \cite{minimumLayer} uses an entirely different, direct approach. Is there a poly-time algorithm that finds optimum-height ordered tree-drawings?

\bibliographystyle{plain}
\bibliography{references,full,gd,papers}

\newpage
\begin{appendix}

\section{$\mathbf{2pw(T)+1}$ Layers is Tight}
\label{section_2pwT_tight}

In this section, we prove Theorem~\ref{thm:lower}: the tree
$T_i$ from Fig.~\ref{fig:fig_2pwt_min_ti} requires $2i+1$ layers in
any order-preserving planar drawing.%
\footnote{The proof does not require that the drawing is straight-line;
the same lower bound holds for drawings with bends.}
We prove this by
induction on $i$; the case $i=0$ is trivial since the single-node tree $T_0$
requires 1 layer.  So assume that $i>0$ and we already know that $T_{i-1}$
requires at least $2i-1$ layers by induction.
We need a helper-lemma.

\begin{lemma}
\label{cl:Hi}
Let $H_i$ be the tree that consists of a single node $v$ with three copies
of $T_{i-1}$ attached.  Then $H_i$ requires at least $2i$ layers.
\end{lemma}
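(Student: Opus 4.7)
Plan: I would argue by contradiction. Suppose $H_i$ admits a planar order-preserving drawing $\Gamma$ on only $k = 2i-1$ layers. Since the restriction of $\Gamma$ to each of the three copies $A,B,C$ of $T_{i-1}$ attached at $v$ is itself a planar order-preserving drawing of $T_{i-1}$, the inductive hypothesis of Theorem~\ref{thm:lower} forces each copy to occupy all $k$ layers. In particular each $X\in\{A,B,C\}$ has a node $t_X$ on layer $1$ and a node $b_X$ on layer $k$.

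For each $X$, let $\pi_X$ denote the unique simple path from $t_X$ to $b_X$ in the tree $X$; in the drawing, $\pi_X$ is a Jordan arc running from the top to the bottom of the strip $\{1,\ldots,k\}$. Since $A,B,C$ are pairwise vertex-disjoint (sharing only $v$, which belongs to none of them) and $\Gamma$ is planar, the three arcs $\pi_A,\pi_B,\pi_C$ are pairwise disjoint as point sets. I would then invoke the standard topological fact that three pairwise disjoint top-to-bottom Jordan arcs in the strip have a linear left-to-right order, say $A,B,C$, and partition the strip into exactly four ``through'' regions (leftmost, between $\pi_A$ and $\pi_B$, between $\pi_B$ and $\pi_C$, rightmost), each of whose boundary meets at most two of the three arcs.

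The vertex $v$, lying on none of the three subtrees, sits in some face of the arrangement of the full subtree drawings $D_A\cup D_B\cup D_C$ inside the strip, where $D_X$ denotes the drawing of $X$. Any further subdivision of a through region caused by branches of some $D_X$ that return to the strip boundary creates only ``pockets'' bounded by that single subtree $D_X$: such a pocket does not span from top to bottom, and since every other $D_Y$ must span the whole strip, no $D_Y$ can reach into it, so the pocket has only $D_X$ on its boundary. Consequently the boundary of $v$'s face meets at most two of $D_A,D_B,D_C$. But the drawing must contain the three edges $e_X=(v,u_X)$ to the linkage-nodes $u_A,u_B,u_C$ without crossing any subtree, which forces every $u_X$ to lie on the boundary of $v$'s face. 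This is impossible, giving the contradiction that $k\geq 2i$.

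The main obstacle will be cleanly justifying the topological claim that no face of the arrangement has all three subtrees on its boundary. The two facts that make this work are that each $D_X$ is a tree (so it has no cycles and can only carve pockets bounded by itself, never by a different subtree) and that each $D_X$ spans the full strip (so any non-spanning pocket is necessarily inaccessible to the other subtrees). Modulo this topological care, the configuration reduces to the clean picture of three parallel top-to-bottom arcs whose four regions plainly cannot host a degree-$3$ vertex adjacent to all of them.
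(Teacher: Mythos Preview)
Your argument is correct, and in spirit it is the same as the paper's: both start by observing that each copy of $T_{i-1}$ must span all $2i-1$ layers and hence contains a top-to-bottom ``blocking path,'' and both then derive a topological obstruction from the three pairwise-disjoint blocking paths together with $v$.

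The difference is only in how that obstruction is packaged. The paper adds two artificial vertices $v'$ above and $v''$ below the strip, joined to the top and bottom ends of the three blocking paths; since $v$ is also connected (inside each subtree) to each blocking path, the result is a planar drawing of a subdivision of $K_{3,3}$, which is impossible. Your version instead argues directly with Jordan-curve separation: the three disjoint top-to-bottom arcs linearly order the strip into four through-regions, each full subtree $D_X$ stays on its side of the neighbouring arcs because it is connected and cannot cross them, and so the face containing $v$ is bordered by at most two of the $D_X$; hence one of the three edges from $v$ must cross a blocking path. This is essentially a bespoke re-proof of the relevant instance of $K_{3,3}$ non-planarity. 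Your ``pockets'' discussion is more than you need: once you note that $D_A$ lies entirely on one side of $\pi_B$ and $D_C$ on the other, whichever side $v$ is on, the edge to the far subtree must cross $\pi_B$, and you are done. The paper's formulation is shorter and invokes a standard fact; yours is more self-contained.
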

\begin{proof}
Assume to the contrary that $H_i$ could be drawn on $2i-1$ layers.  For
each copy of $T_{i-1}$, we require $2i-1$ layers.  Hence each copy of $T_{i-1}$
gives rise to a {\em blocking path} that connects the topmost and bottommost
layer and stays within that copy of $T_{i-1}$.
Add a node $v'$ above the drawing connected to the three top ends of
the three blocking paths, and a node $v''$ below the drawing connected
to the three bottom ends of the three blocking paths.  Also observe that
$v$ is connected (via a path within that copy of $T_{i-1}$)
to each of the three blocking paths.
Therefore the three blocking
paths, together with $\{v,v',v''\}$, give a planar drawing of a subdivision 
of $K_{3,3}$, an impossibility.\qed
\end{proof}

\vspace*{-5mm}
\begin{figure}[ht]
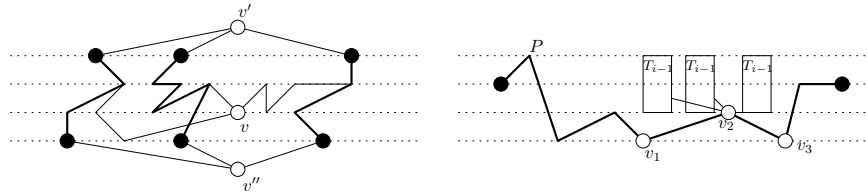

\hspace*{\fill}
\includegraphics[page=1,width=0.45\linewidth]{lowerBounds.pdf}
\hspace*{\fill}
\includegraphics[page=2,width=0.45\linewidth]{lowerBounds.pdf}
\hspace*{\fill}
\caption{(Left) We can construct a planar drawing of $K_{3,3}$.  (Right) If $v_2$ is not in the top row, then the path $P$ forces a copy of $H_i$ to be drawn within $2i-1$ layers.}
\end{figure}

Now we give the induction step of the proof of 
Theorem~\ref{lem_tree_2pwt_layers}.
Since $T_i$ contains $H_i$, by Lemma~\ref{cl:Hi} it requires at least $2i$
layers.  Assume for contradiction that we have
a drawing $\Gamma$  of $T_i$ on exactly $2i$ layers.  
Let $P$ be a path that connects a leftmost node in $\Gamma$ to a 
rightmost node in 
$\Gamma$ (breaking ties arbitrarily).   It is well-known (see for example
\cite{Crescenzi1992187}) that any subtree that is node-disjoint from $P$
must be drawn either within the bottommost $2i-1$ layers or within the 
topmost $2i-1$ layer.

Observe that $P$ must contain path $v_1,v_2,v_3$, for otherwise we have
a copy of $H_i$ at one of $v_1,v_3$ that is node-disjoint from $P$ and
would be drawn in $2i-1$ layers, which is impossible.
Now consider the layer that $v_2$ is on.  Since we have $2i\geq 2$ layers,
one of the top and bottom layer does not contain $v_2$, say $v_2$ is not
on the bottom layer.  Since path $P$ uses $v_1,v_2,v_3$, and since the
drawing is order-preserving, there must be three copies of $T_{i-1}$
that are attached at $v_2$ and above path $P$, hence in the top $2i-1$
layers.  Vertex $v_2$ together with these three copies forms
an $H_i$, and since it is vertex-disjoint from $P$ (except at $v_2$,
but $v_2$ is not in the bottom layer either), it is drawn in $2i-1$ layers. 
This contradicts Lemma~\ref{cl:Hi},
so no drawing $\Gamma$ of $T_i$ on $2i$ layers can exist. \qed

\end{appendix}

\end{document}